\journalname{Graphs and Combinatorics}
\begin{document}
\title{Embedding a balanced binary tree on a bounded point set}
\author{Fatemeh Rajabi-Alni\inst{1}\thanks{\emph{Corresponding author:} fatemehrajabialni@yahoo.com} \and Alireza Bagheri\inst{2}
}                     


%
%
\institute{Department of Computer Engineering, Islamic Azad University, North Tehran Branch, Tehran, Iran. \and Department of Computer Engineering and IT, Amirkabir University of Technology, Tehran, Iran.}
\maketitle
\begin{abstract}

Given an undirected planar graph $G$ with $n$ vertices and a set $S$ of $n$ points inside a simple polygon $P$, a \textit {point-set embedding} of $G$ on $S$ is a planar drawing of $G$ such that each vertex is mapped to a distinct point of $S$ and the edges are polygonal chains surrounded by $P$. A special case of the embedding problem is that in which $G$ is a balanced binary tree. In this paper, we present a new algorithm for embedding an $n$-vertex balanced binary tree $BBT$ on a set $S$ of $n$ points inside a simple $m$-gon $P$ in $O(m^2+n{\log^2 n}+mn)$ time with at most $O(m)$ bends per edge.
\end{abstract}
\begin{keyword}
Point-set embedding. Bounded point set. Simple polygon. Balanced binary tree. Straight skeleton.
\end{keyword}
\receive{June, 2012}
\finalreceive{June 18, 2012}
\section{Introduction}
\label{IntroSect}

Let $G$ be an undirected $n$-vertex planar graph, and $S$ be a set of $n$ points. \textit {The point-set embedding problem}, also known as \textit {the point-set embeddability problem}, aims at drawing $G$ with no edges crossing, such that each vertex is mapped to a distinct point of $S$ and the edges are polygonal chains.

The largest subclass of planar graphs that admits a \textit {straight-line embedding}, i.e. an embedding with no edge-bends, on any point set is the class of outer planar graphs; this was shown by Gritzmann et al. \cite{11} for the first time, later it was rediscovered by Castaneda et al. \cite{7}, and finally Bose \cite{4} presented an efficient algorithm for it. The point-set embeddability problem has been extensively studied \cite{4,5,6,7,8,9,11,12,13,14,15,16}, but few researches have considered a bounding polygon for the point set \cite{3}. Note that deciding if there is a straight-line embedding of a general graph on a given set of points $S$ is NP-hard \cite {6}.

The constraint that the edges should be placed inside a surface with prescribed shape can be important for example to design a wired network on a surface with given shape.

Let $T$ be a tree with $n$ vertices and let $S$ be a set of $n$ points inside a polygon $P$ with $k$ reflex vertices, by an immediate implication of Lemma 7 of \cite {9} there exists an $O(n^2 \log n)$ time algorithm that computes a point-set embedding of $T$ on $S$ inside $P$ such that each edge has at most $2\left\lceil \frac{k}{2}\right\rceil $ bends.

Given a simple polygon $P$ with $m$ vertices, Bagheri et al. \cite {3} has presented an $O(m^2 n^{2.2})$ algorithm for planar poly-line drawing of an $n$-vertex complete binary tree on the surface of $P$, such that the total number of the edge bends is bounded by $O(m n^{1.6})$, in the worst case.

Let $BBT$ be an $n$-vertex balanced binary tree, i.e. a binary tree in which the depth of the two subtrees of every node differs by at most one. In this paper, we present a new algorithm for point-set embedding of $BBT$ on a set of $n$ points $S$ surrounded by a simple $m$-gon $P$, in $O(m^2+n{\log^2 n}+mn)$ time with at most $O(m)$ bends per edge. We use the partitioning idea of Bagheri et al. \cite {3} in our embedding algorithm.

This paper is organized as follows. Preliminary definitions are in Section \ref{PreliminSect}. In Section \ref{Newembedding}, we present our new embedding algorithm for embedding a balanced binary tree on a point set surrounded by a simple polygon. In Section \ref{exa}, we present an example for our new embedding algorithm. Conclusions and open problems are in Section \ref{Conclsection}.

\section{Preliminaries}
\label{PreliminSect}
\textit {The skeleton} of a simple polygons is a one-dimensional structure that describes the two-dimensional structure of the polygon with lower complexity than the original polygon.

There are two types of skeletons for a simple polygon, \textit {the medial axis} and \textit {the straight skeleton}. The medial axis of a simple polygon consists of all interior points of the simple polygon whose closest point on the boundary of it is not unique. The straight skeleton $SS$ is defined as the union of the pieces of angular bisectors traced out by the polygon vertices during a shrinking process \cite {1}. In this shrinking process all edges of $P$ are contracted inwards parallel to themselves and at a constant speed. So, each vertex of $P$ moves along the angular bisector of its incident edges until the boundary does not change topologically. Then, a new node is created and the shrinking process continues in the resulting polygon(s) until the area of all the resulting polygons becomes zero.

In a partition of a polygon into subpolygons, two subpolygons sharing an edge are called \textit{neighbours}. By Lemma \ref{face-num}, the straight skeleton of a simple polygon $P$ with $m$ vertices partitions the interior of $P$ into a cycle of $m$ monotone neighbor subpolygons, called \textit {faces}. According to Theorem $1$ of \cite{1}, each face of $P$ shares exactly one edge, called \textit {the boundary edge}, with the polygon $P$.

\begin{lemma} \cite{1}
\label{face-num}
The straight skeleton of a simple polygon $P$ with $m$ vertices consists of exactly $m$ connected faces, $m-2$ dummy vertices, and $2m-3$ dummy edges.
\end{lemma}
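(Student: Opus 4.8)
The plan is to read off all three counts from Euler's formula for plane subdivisions together with a degree-sum argument, rather than inducting on the events of the shrinking process.

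First I would regard the straight skeleton $SS$ together with the boundary $\partial P$ as a single connected plane graph $H$. Its vertices are the $m$ original vertices of $P$ plus the dummy vertices; write $k$ for the number of the latter. Its edges are the $m$ edges of $P$ plus the dummy edges, i.e.\ the maximal straight pieces of the bisector traces; write $d$ for the number of the latter. For the faces I would argue that, during the shrinking process, the moving copy of each fixed edge of $P$ sweeps out one region of the interior before it disappears, that distinct edges sweep interior-disjoint regions, and that these regions exhaust the interior of $P$; hence the bounded faces of $H$ are in bijection with the edges of $P$, so $H$ has exactly $m$ bounded faces together with the unbounded one, i.e.\ $F=m+1$.

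Plugging $V=m+k$, $E=m+d$ and $F=m+1$ into $V-E+F=2$ collapses to $d=m+k-1$. For a second relation I would count vertex degrees in $H$. Each original vertex of $P$ meets its two polygon edges and exactly one outgoing angular bisector, hence has degree $3$; and, in general position, every dummy vertex also has degree $3$, since an edge event terminates two bisector edges and starts one, while a split event terminates one bisector edge, subdivides the struck polygon edge, and starts two -- three edges of $H$ meet the new node in either case. Then $\sum_{v}\deg v = 2E$ reads $3(m+k)=2(m+d)$; substituting $d=m+k-1$ gives $3m+3k=4m+2k-2$, hence $k=m-2$ and $d=2m-3$, while the face count is $m$. (As a consistency check, this says that the straight skeleton alone is a tree with $m$ leaves -- the vertices of $P$ -- and $m-2$ internal nodes, all of degree $3$.)

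The step I expect to be the real obstacle is the claim that no dummy vertex has degree larger than $3$. Simultaneous or coincident events create higher-degree nodes, and then $k$ and $d$ dip below the generic values $m-2$ and $2m-3$; one must therefore either state the lemma for the general-position (equivalently, symbolically perturbed) straight skeleton, or replace each degree-$(3+t)$ node by $t+1$ nodes of degree $3$ joined by $t$ zero-length dummy edges and count those instead. Pinning down this bookkeeping, and verifying that the sweeping argument really partitions the interior even across split events -- where a single face may straddle two descendant sub-polygons -- is where the actual work lies; the complete argument is the one cited as \cite{1}.
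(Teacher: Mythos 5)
The paper offers no proof of this lemma at all --- it is quoted from Aichholzer and Aurenhammer \cite{1} --- so the only meaningful comparison is with that source. There the face count comes, as in your sweeping argument, from the bijection between edges of $P$ and skeleton faces, and the vertex/edge counts come from observing that the skeleton of a simple polygon (without holes) is a tree whose leaves are the $m$ polygon vertices and whose internal nodes have degree at least $3$; your Euler-formula-plus-degree-sum computation on $H=SS\cup\partial P$ is an equivalent and perfectly sound route, and your parenthetical consistency check (a tree with $m$ leaves and $m-2$ degree-$3$ internal nodes) is in effect the original argument. Your caveat about degeneracies is also well taken, and in fact cuts against the lemma as transcribed in this paper: \cite{1} states \emph{at most} $m-2$ nodes and \emph{at most} $2m-3$ arcs, and e.g.\ a square has a single degree-$4$ node and $4$ arcs, so the word ``exactly'' is only correct under general position or under the convention you propose of splitting a degree-$(3+t)$ node into $t+1$ degree-$3$ nodes joined by zero-length edges.

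Two local points to fix. First, your description of a split event is off: the event node lies on the \emph{shrunk offset} of the struck edge, hence strictly inside $P$, so it does not subdivide any edge of $H$; read literally, ``terminates one bisector edge, subdivides the struck polygon edge, and starts two'' would give that node degree $5$ and wreck the identity $3(m+k)=2(m+d)$. The correct accounting is simply one terminating arc plus two emanating arcs, i.e.\ degree $3$, which is the number you actually use. Second, the claim that each polygon vertex has degree exactly $3$ deserves the same general-position disclaimer as the dummy vertices (a vertex whose bisector is immediately involved in a multiple event is the boundary analogue of the degeneracy you already flag). With those repairs your argument is a complete proof of the generic statement.
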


\begin{lemma} \cite{1}
\label{monotone}
Each face $f(e)$ is monotone in direction of its boundary edge $e$.
\end{lemma}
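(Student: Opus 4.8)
The plan is to reduce the claim to a statement about a single well-behaved function on each vertical slice of $P$, via the ``roof'' interpretation of the straight skeleton. Put coordinates so that $e$ lies on the $x$-axis with the interior of $P$ locally above it; then ``$f(e)$ is monotone in the direction of $e$'' means precisely that every line perpendicular to $e$, i.e.\ every vertical line $x=c$, meets $f(e)$ in a connected segment, equivalently that $\partial f(e)$ splits into two $x$-monotone chains meeting at the leftmost and rightmost vertices of $f(e)$. (Note that $f(e)$ is in general \emph{not} monotone in the perpendicular direction: a split event --- a reflex vertex of the shrinking wavefront crashing into an offset copy of $e$ --- leaves a notch in $f(e)$, so horizontal slices may be disconnected; this is why the direction named in the statement is the correct one.)

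Next, recall (or re-derive from the shrinking process) the following facts. For $p$ in the interior of $P$ let $\rho(p)$ be the time at which the wavefront first reaches $p$, and set $\rho\equiv 0$ on $\partial P$. Then $\rho$ is continuous and piecewise linear, its graph is a terrain over $P$ whose flat facets correspond bijectively to the faces of the straight skeleton, and the facet sitting over $f(e)$ is a subset of the plane $z=y$ --- it is tilted at exactly $45^\circ$ toward the interior, because the offsets $e(t)$ of $e$ all lie on the line $y=t$. More generally the facet over $f(g)$ lies in the plane $z=\mathrm{dist}\bigl(\,\cdot\,,\ell_g\bigr)$, where $\ell_g$ is the supporting line of $g$, which again has maximal slope $1$. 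Hence the restriction of $\rho$ to any straight line is piecewise linear with all slopes in $[-1,1]$, and (assuming general position, that no two edges of $P$ are collinear, so that the plane $z=y$ supports only the facet of $f(e)$) we have $(c,y)\in f(e)$ if and only if $\rho(c,y)=y$.

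The heart of the argument is then a short monotonicity observation. Fix $c$ and let $\phi(y)=\rho(c,y)$ on the slice $P\cap\{x=c\}$. Suppose $(c,y_1)$ and $(c,y_2)$ both lie in $f(e)$ with $y_1<y_2$, so $\phi(y_1)=y_1$ and $\phi(y_2)=y_2$; assume first that the whole segment from $(c,y_1)$ to $(c,y_2)$ stays inside $P$ (automatic when the slice is connected). Put $\psi(y)=y-\phi(y)$. Since $\phi$ has slope $\le 1$ everywhere, $\psi$ is non-decreasing; combined with $\psi(y_1)=\psi(y_2)=0$ this forces $\psi\equiv 0$, hence $\rho(c,y)=y$, on $[y_1,y_2]$. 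By the last sentence of the previous paragraph this means $\{c\}\times[y_1,y_2]\subseteq f(e)$. Therefore $f(e)\cap\{x=c\}$ is an interval, which is exactly monotonicity in the direction of $e$.

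I expect the real work to be in two places I have glossed over. First, erecting the roof rigorously: unlike for the medial axis, $\rho$ is \emph{not} simply the Euclidean distance to $\partial P$, so one must argue carefully from the shrinking process that $\rho$ is well defined, continuous, and $45^\circ$-faceted --- and this is exactly where split events have to be handled honestly. Second, discharging the ``segment stays in $P$'' assumption: when $P\cap\{x=c\}$ has several components one must show $f(e)$ cannot meet more than one of them, which uses that $f(e)$ is connected (Lemma~\ref{face-num}) together with the terrain structure. The general-position hypothesis can be removed by a standard perturbation-and-limit argument, since being a monotone polygon is a closed condition.
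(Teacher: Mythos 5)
The paper never proves this lemma --- it is imported verbatim from Aichholzer and Aurenhammer \cite{1} --- so the only question is whether your roof-model argument stands on its own. Its engine is sound and is essentially the standard one underlying the cited reference: once one grants that the wavefront-arrival function $\rho$ is a continuous terrain whose facets are exactly the skeleton faces, with the facet over $f(g)$ lying in the plane $z=\mathrm{dist}(\cdot,\ell_g)$, the observation that $\rho$ has slopes in $[-1,1]$ along any segment in $P$, hence $y-\rho(c,y)$ is non-decreasing and its zero set on a slice is an interval, is the right way to get monotonicity (and you correctly identify the direction of $e$, not the perpendicular one, as the one that works). Be aware, though, that the ``roof facts'' you grant are precisely the content of \cite{1}; leaning on them is acceptable here only because the paper itself does no more than cite that reference.

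Two concrete gaps remain. (i) The step you lean on, ``$(c,y)\in f(e)$ if and only if $\rho(c,y)=y$'', is false as stated even in general position: forbidding collinear edges only prevents another facet from lying \emph{inside} the plane $z=y$, but every other (tilted) facet still meets that plane in a segment, so points outside $f(e)$ can satisfy $\rho(p)=\mathrm{dist}(p,\ell_e)$ (for instance skeleton points equidistant from $\ell_e$ and another supporting line). The conclusion can be rescued: within any other facet $f(g)$ the set $\{\rho=y\}$ is contained in a line whose normal is $n_g-(0,1)$ with $n_g$ the unit normal of $\ell_g$, and this line is never vertical unless $\ell_g$ is collinear with $\ell_e$; so on the fixed line $x=c$ the extra solutions are finitely many points, and since faces are closed, $\rho(c,\cdot)\equiv\mathrm{id}$ on $[y_1,y_2]$ still forces the whole segment into $f(e)$. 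But this has to be said; as written the ``iff'' is wrong and it is exactly what your final deduction uses. (ii) Your plan for disconnected slices --- connectivity of $f(e)$ ``together with the terrain structure'' --- is not an argument: a connected face could a priori wind around a notch and meet two components of $P\cap\{x=c\}$. The clean fix follows from what you already established: $\rho=0$ on $\partial P$ and $\rho$ is $1$-Lipschitz along segments in $P$, hence $\rho(p)\le\mathrm{dist}(p,\partial P)$. If $(c,y_1)$ and $(c,y_2)$ with $y_1<y_2$ lay in $f(e)$ but in different slice components, the bottom endpoint $(c,y_b)$ of the component of $(c,y_2)$ satisfies $y_1\le y_b<y_2$ and is a boundary point, so $\mathrm{dist}((c,y_2),\partial P)\le y_2-y_b<y_2=\rho(c,y_2)$, a contradiction; thus the two points share a component and the Lipschitz argument applies to the segment between them. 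With (i) and (ii) repaired the proof is correct; as written, both are genuine holes.
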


We denote the edge of $P$ that is adjacent to the face $f$ by $e(f)$, and the corresponding face of the boundary edge $e$ by $f(e)$.  
The bisector pieces of the straight skeleton of $P$ are called \textit {dummy edges}, and their endpoints that are not vertices of $P$ are called \textit {dummy vertices}.  We call the dummy edge that is connected to a boundary edge \textit {the boundary connected dummy edge} or $bcd$ edge. The dummy edges that are not $bcd$ edges are called \textit {the internal dummy edges}.

So, each face has one boundary edge and two $bcd$ edges. Also, it may have some internal dummy edges. Figure \ref{fig:1}(a) shows the straight skeleton of a simple polygon, in which the line segments $ab$ and $de$ are the two $bcd$ edges of the $f(ae)$, and $bc$ and $cd$ are the internal dummy edges. A corner $v$ of a polygon in the plane is said to be a \textit {reflex corner} if the angle at $v$ inside the polygon is greater than 180 degrees. In Figure \ref{fig:1}(b) the dashed line splits the reflex dummy vertex into two convex corners.


 \begin{figure}
\vspace{-4cm}
\hspace{-10cm}
\resizebox{2\textwidth}{!}{%
  \includegraphics{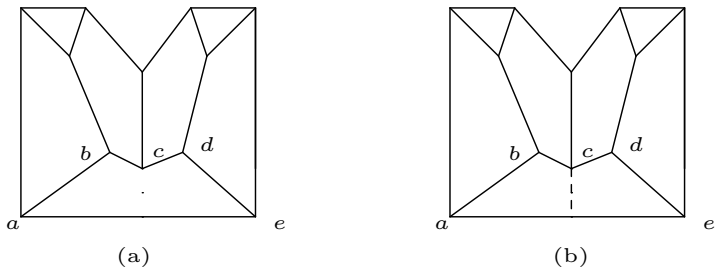}
}
\vspace*{-34.9cm}
\caption{The straight skeleton; definitions and notations.}
\label{fig:1}
\end{figure}

\textit {The neighbor vertices} of each vertex of a straight skeleton $SS$ are the vertices of $SS$ that are adjacent to it. The set of all the faces that are adjacent to a vertex of $SS$ are \textit {the neighbor faces} of it. In Figure \ref{fig:1}(a), the vertices $b$ and $d$ are the neighbor vertices of $c$, and the face $abcde$ is the neighbor face of the vertices $a$, $b$, $c$, $d$, and $e$. For each edge $(i,j)$ of $SS$ we define its weight by subtracting sum of the areas of adjacent faces of $i$ from the sum of the areas of adjacent faces of $j$. \textit {The middle edge} of $SS$ is the edge that has the minimum positive weight among all the other edges of $SS$. \textit {The middle point} of $SS$ is the middle point of its middle edge, if the middle edge is unique, otherwise the common point of the middle edges \cite{3}.

Now, we review the drawing algorithm presented by Bagheri et al. \cite{3}. Let $CBT$ be an $n$-vertex complete binary tree and $P$ be a simple polygon with $m$ vertices, their algorithm draws $CBT$ on the surface of $P$ in $O(m^2 n^{2.2})$ time with $O(m n^{1.6})$ total edge bends. This algorithm partitions the surface of the given simple $m$-gon into $\frac{n+1}{2}$ equal area subpolygons (where $n$ is the number of the nodes of the input complete binary tree) as follows. 

\textit{The bisecting chain} is a polygonal chain partitioning the input simple polygon into two equal area subpolygons $R$ and $L$. Their algorithm first computes a bisecting chain $C$ for the input simple polygon by a nonrecursive bisecting algorithm using the straight skeleton of $P$. The root of the input complete binary tree $CBT$ is laid on $C$. Then, the bisecting chains of the subpolygons $R$ and $L$ are computed such that each bisecting chain has one end point on $C$. This process is recursively repeated for drawing the left and right subtrees of $CBT$ inside $L$ and $R$, respectively. Bisecting the new subpolygons continues until getting $\frac{n+1}{2}$ new subpolygons. In fact, they lie the edges and internal nodes of the $CBT$ on the partitioning line segments and each leaf on the surface of one of the subpolygons.

\section{The new embedding algorithm}
\label{Newembedding}

In this section, we describe our new embedding algorithm which embeds an $n$-vertex balanced binary tree $BBT$ on a point set $S$ surrounded by a simple polygon $P$ with $m$ vertices. We use the idea of Bagheri et al. \cite{3}; using the straight skeleton of $P$ we decompose the surface of $P$ into convex regions such that each region contains zero or one single point of $S$ and the remaining points lie on the partitioning line segments. We use the partitioning chains as the edges.

\begin{figure}[h]
\vspace*{-5cm}
\hspace{-10cm}
\resizebox{2\textwidth}{!}{%
  \includegraphics{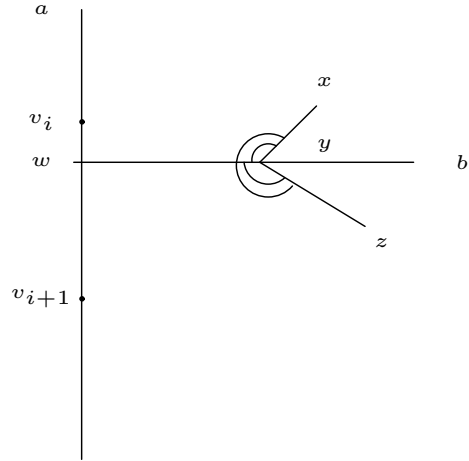}
}
\vspace*{-32cm}
\caption{Splitting a reflex vertex of a face into two convex vertices.}
\label{fig:5}       
\end{figure}

\begin{figure}[h]
\vspace*{-6cm}
\hspace{-10cm}
\resizebox{2\textwidth}{!}{%
  \includegraphics{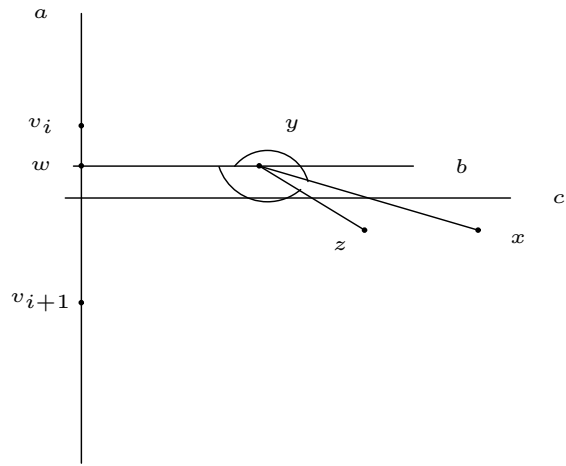}
}
\vspace*{-32cm}
\caption{Case I: $\widehat{xyw}>180$ degrees.}
\label{fig:6}       
\end{figure}

\begin{figure}[h]
\vspace*{-5cm}
\hspace{-10cm}
\resizebox{2\textwidth}{!}{%
  \includegraphics{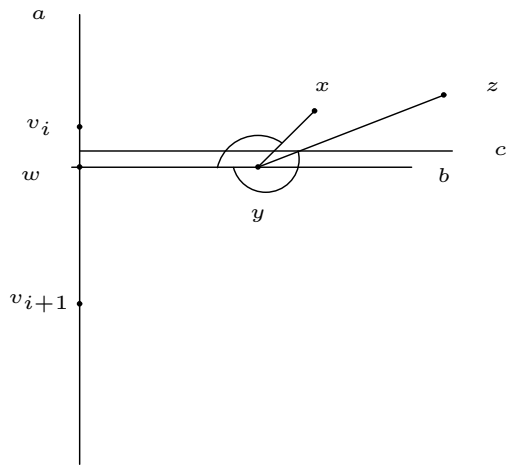}
}
\vspace*{-32cm}
\caption{Case II: $\widehat{wyz}>180$ degrees.}
\label{fig:7}       
\end{figure}

\begin{lemma}
\label{lem:1}
Let $a$ be a straight line that passes through the line segment $e(f)$. Each line perpendicular to the straight line $a$ passing through a reflex vertex of $f(e)$, splits the reflex vertex into two convex vertices.
\end{lemma}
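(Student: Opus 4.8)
The plan is to reduce everything to a local angle computation at the reflex vertex, with the monotonicity of $f(e)$ (Lemma~\ref{monotone}) doing the real work. Fix coordinates so that the line $a$, and hence the boundary edge $e(f)$ it supports, is horizontal; then $f(e)$ is $x$-monotone, and the perpendicular line $\ell$ through the given reflex vertex $v$ is vertical, and I place $v$ at the origin. I first dispose of the key structural point: $v$ is neither the leftmost nor the rightmost vertex of $f(e)$. Indeed, if $v$ were, say, the leftmost vertex, then every vertex of $f(e)$, and therefore $f(e)$ itself (contained in the convex hull of its vertices), would lie in the closed half-plane $\{x\ge 0\}$; but then the interior angle of $f(e)$ at $v$ would be contained in a half-plane's worth of directions, hence at most $180^{\circ}$, contradicting that $v$ is reflex. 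Assuming for the moment that neither edge of $f(e)$ at $v$ is perpendicular to $a$, it follows that one edge runs to a neighbour $u_1$ with $x(u_1)<0$ and the other to a neighbour $u_2$ with $x(u_2)>0$; writing $\rho_i$ for the ray from $v$ through $u_i$, this says $\rho_1$ points into the open left half-plane and $\rho_2$ into the open right half-plane, so $\ell$ strictly separates $\rho_1$ from $\rho_2$.

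Next I locate the upward and downward rays $\ell^{+},\ell^{-}$ of $\ell$ relative to the two wedges at $v$ bounded by $\rho_1$ and $\rho_2$: the interior (reflex) wedge $\mathcal W_{\mathrm{in}}$, whose angle exceeds $180^{\circ}$, and its complement $\mathcal W_{\mathrm{out}}$, whose angle is therefore below $180^{\circ}$. Since $\mathcal W_{\mathrm{out}}$ is convex and meets both $\rho_1$ and $\rho_2$, it contains a segment joining a point of $\rho_1$ to a point of $\rho_2$; that segment crosses $\ell$ at a point $\neq v$ (the rays $\rho_1,\rho_2$ are not opposite, $v$ being reflex), so $\mathcal W_{\mathrm{out}}\cap\ell$ is a ray from $v$. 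Because $\mathcal W_{\mathrm{out}}$ has angle below $180^{\circ}$ it cannot contain both $\ell^{+}$ and $\ell^{-}$, so it contains exactly one of them and $\mathcal W_{\mathrm{in}}$ contains exactly the other. Reflecting in the line $a$ if necessary — this swaps $\ell^{+}$ with $\ell^{-}$ and preserves the $x$-monotonicity and the left/right half-planes — I may assume $\ell^{-}\subset\mathcal W_{\mathrm{in}}$.

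Cutting $f(e)$ along $\ell$ at $v$ then subdivides the reflex wedge $\mathcal W_{\mathrm{in}}$ along the ray $\ell^{-}$ into two sub-wedges, $\mathcal A_1$ between $\rho_1$ and $\ell^{-}$ and $\mathcal A_2$ between $\ell^{-}$ and $\rho_2$, and one of these is the interior angle at $v$ in each of the two resulting sub-polygons; it remains to bound each by $180^{\circ}$. Measuring directions as angles, $\ell^{-}$ has direction $270^{\circ}$, while the first paragraph gives $\mathrm{dir}(\rho_1)\in(90^{\circ},270^{\circ})$ and $\mathrm{dir}(\rho_2)\in(-90^{\circ},90^{\circ})$. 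Sweeping counterclockwise from $\rho_1$, one meets $\ell^{-}$ after $270^{\circ}-\mathrm{dir}(\rho_1)$ and then $\rho_2$ after a further $90^{\circ}+\mathrm{dir}(\rho_2)$; both quantities lie in $(0^{\circ},180^{\circ})$, and they are precisely the angular measures of $\mathcal A_1$ and $\mathcal A_2$. Hence each of the two new angles at $v$ is strictly less than $180^{\circ}$, i.e.\ convex.

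I expect the delicate point to be the second paragraph: one has to be sure that $\ell$ actually enters the reflex side of $v$, and to pin down which of $\ell^{\pm}$ does so, before "splitting the reflex angle" even makes sense; once monotonicity has been converted into the single fact that $\rho_1,\rho_2$ lie on opposite sides of $\ell$, the rest is elementary, and it is exactly here that $\ell\perp a$ is used, since a cut line in a different direction could split $\mathcal W_{\mathrm{in}}$ into a convex and a reflex piece. Finally, the temporarily excluded degenerate case, where an edge of $f(e)$ at $v$ is perpendicular to $a$, can genuinely occur at a reflex vertex; there the same computation yields one convex angle and one straight ($180^{\circ}$) angle at $v$, so the conclusion holds with "convex" read in the weak sense — equivalently, the reflex vertex is then replaced by a single convex vertex.
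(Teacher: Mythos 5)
Your argument is correct in substance, but it follows a genuinely different route from the paper's. The paper argues by contradiction: it assumes one of the two angles created at the reflex vertex still exceeds $180^{\circ}$, and then exhibits (essentially by inspection of Figures \ref{fig:6} and \ref{fig:7}) a line perpendicular to $a$ meeting the face in more than two points, contradicting Lemma \ref{monotone}. You argue directly: you first distill monotonicity into one structural fact, namely that the two edges at the reflex vertex leave it on opposite sides of the perpendicular line $\ell$, and then an elementary angle count at the vertex shows that both resulting angles are strictly below $180^{\circ}$. What your version buys is a picture-free argument that pinpoints exactly where the hypothesis $\ell\perp a$ is used, plus an honest treatment of the degenerate case in which an edge at the reflex vertex is itself perpendicular to $a$ (which the paper silently ignores); there one indeed gets one convex and one straight angle, which is harmless for the way the lemma is used later, since a $180^{\circ}$ corner does not destroy convexity of the subfaces.

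One step, however, needs its justification repaired. You infer that one neighbour of $v$ has $x<0$ and the other $x>0$ from the fact that $v$ is neither the leftmost nor the rightmost vertex of $f(e)$. For a general simple polygon that implication is false: a non-extreme vertex can have both neighbours on the same side of the vertical line through it. The inference is valid only because $f(e)$ is monotone, so Lemma \ref{monotone} must be invoked at this point, not only where you state it. Either cite the standard fact that the boundary of an $x$-monotone polygon splits into two $x$-monotone chains joined at the $x$-extreme vertices (so a vertex interior to a chain has one neighbour weakly to its left and one weakly to its right, and strictly so once the non-degeneracy assumption excludes vertical edges at $v$), or argue directly: if both neighbours of the reflex vertex lay strictly to the right, then a vertical line slightly to the right of $v$ would meet $f(e)$ in a disconnected set, interior points above and below $v$ but exterior points in the convex wedge between the two edges, contradicting monotonicity. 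With that sentence added, your proof is complete.
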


\begin{proof}
In Figure \ref{fig:5} we pass through the reflex vertex $\widehat{xyz}$ a straight line $b$ prependicular to the line $a$, where $x$, $y$ and $z$ are three consecutive vertices of $f(v_i v_{i+1})$. Assume that the lemma is false. Let $b$ be a line passing through the reflex vertex $\widehat{xyz}$ and perpendicular to the straight line $a$, and let $w$ be the intersection point of the lines $a$ and $b$. The line $b$ splits $\widehat{xyz}$ vertex into the vertices $\widehat{xyw}$ and $\widehat{wyz}$. Two cases arise: $\widehat{xyw}>180$ degree and $\widehat{wyz}>180$ degree. Figure \ref{fig:6} shows the first case, $\widehat{xyw}>180$. In Figure \ref{fig:6} there exists a line $c$ perpendicular to the straight line $a$ that intersects $f(v_i v_{i+1})$ in more than two points and this contradicts Lemma \ref{monotone}.
Figure \ref{fig:7} shows the second case, $\widehat{wyz}>180$ degrees. In Figure \ref{fig:7} there is a line $c$ perpendicular to the straight line $a$ that intersects $f(v_i v_{i+1})$ in more than two points and this contradicts Lemma \ref{monotone}.
\end{proof}

In this paper, the functions $Num(BBT)$ returns the number of the nodes of the balanced binary tree $BBT$. Also, the function $Num(P)$ returns the number of the points inside the simple polygon $P$. Let $BBT_1$ and $BBT_2$ denote the left and right subtrees of $BBT$, respectively. Our algorithm consists of two general steps, the recursive and the non recursive step (see Algorithm  \ref{EmbeddingAlg}). In the first step, the non recursive step, we compute a partitioning chain $C$ that divides the input simple polygon into two subpolygons $R$ and $L$, such that the number of the points in $L$ and $R$ are equal to $Num(BBT_1)$ and $Num(BBT_2)$, respectively and the remaining single point lies on $C$. Then in the second step, we recursively partition the new subpolygons $L$ and $R$ into other subpolygons until we get convex regions with zero or one single point in each region. If a point lie on a straight skeleton edge or on a decomposing edge, we assume that it is contained into an arbitrary face that is incident to it. In the following, we explain these steps.

\alglanguage{pseudocode}
\begin{algorithm}[h]

\caption{The Embedding Algorithm}
\label{EmbeddingAlg}
\begin{algorithmic}[1]
\Statex \textbf{Input:} A simple polygon $P$, a balanced binary tree $BBT$, a point set $S$ inside $P$

\Statex \textbf{Output:} A point-set embedding of $BBT$ on $S$ such that the edges are polygonal chains inside $P$
\If{$BBT$ is  null}
\State Stop
\EndIf
\State $(R,L,SCR,SCL,q,SSSR,SSSL)=$\Call{PartitioningAlg}{P,S,BBT}
\State \Call{RecEmbedding}{L,SSSL,SCL,q,$BBT_1$,l}
\State \Call{RecEmbedding}{R,SSSR,SCR,q,$BBT_2$,r}
\State Map the root of $BBT$ on the point $q$

\end{algorithmic}
 \end{algorithm}

\noindent \textbf{Step 1. The non recursive step.}

In this step, the partitioning algorithm \textit{PartitioningAlg} is called (line $3$ of Algorithm \ref{EmbeddingAlg}). It partitions the surface of $P$ into two subpolygons $R$ and $L$ using the straight skeleton of $P$, such that $Num(BBT_1)=Num(L)$, $Num(BBT_2)=Num(R)$, and the remaining single point lies on the partitioning chain.

We first partition $P$ into a sequence of the neighbor convex subpolygons using the straight skeleton of $P$ as follows. We find the straight skeleton $SS$ of $P$. Then, we split each reflex vertex of the faces into the convex vertices. By Lemma \ref{lem:1}, we can split each reflex vertex of a face $f(e)$ into two convex vertices using a straight line passing through the reflex vertex and perpendicular to the straight line that contains the edge $e$. We find the middle point of $SS$, called $ms$, lying on the boundary of one or more faces of $SS$. We select one of the incident faces of $ms$ as the starting face $sf$. Let $\partial P$ be the boundary of $P$, considered as a counter clockwise path $CCW$, starting from and ending at the boundary edge of the starting face $sf$. Starting from $sf$, we process each face, one by one, in the order its boundary edge appears in $CCW$, and add the splitting line segments to the reflex vertices. So we pass a straight line segment, called \textit{splitting line segment}, through each reflex vertex of the faces of $P$. The union of $SS$ and the splitting line segments is a structure called \textit{split straight skeleton} or $SSS$. After adding the splitting line segments, all subfaces of $SSS$ are convex; in Figure \ref{fig:1}(b) the dashed line shows the splitting line segment. The function $Split(P)$ adds the splitting line segments to $SS$. Each face of $P$ is incident to one edge of $P$, so $SSS$ partitions the surface of $P$ into a cycle of the neighbor convex subfaces.

Then, we pass through $ms$ a straight line segment, called ${sl}_1$ that is perpendicular to $e(f(ms))$ (Figure \ref{fig:8}). By Lemma \ref{monotone} this line splits one of the subfaces of $f$ into at most two simple subfaces. We call the subface that is in the same direction with $e(f)$, $f(t)$ and the other subface, if exists, $f(m)$. 

\begin{figure}[h]
\vspace*{-5cm}
\hspace{-10cm}
\resizebox{2\textwidth}{!}{%
  \includegraphics{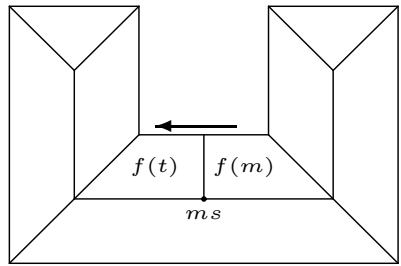}
}
\vspace*{-35cm}
\caption{Two new subfaces $f(m)$ and $f(t)$.}
\label{fig:8}       
\end{figure}

Let $SC$ be a sequence of the consecutive subfaces of $SSS$, where the faces are in $CCW$ order and the subfaces of each face $f$ are ordered with increasing $e(f)$ direction such that $SC.first=f(t)$ and $SC.last=f(m)$ (see Figure \ref {fig:9}). Starting from $f(t)$ we process the subfaces, one by one, in the order that they appear in $SC$. Let $S_1=Num(BBT_1)$ denote the number of the nodes of the left subtree of $BBT$. Let $SUM$ denote the sum of the points surrounded by the processed subfaces. We process the subfaces until $SUM$ be greater than $S_1$ and then stop. Then, we split the current subface $cf$ into two subfaces $cf(m)$ and $cf(t)$ using a straight line segment, called \textit{the dividing line segment}. Note that $cf(m)$ is the subface that is not in the same direction with $e(cf)$ and the other subface is $cf(t)$. Moreover, the dividing line segment must intersect a point of $cf$ such that the sum of the points surrounded by the processed subfaces, i.e. $SUM$, plus $Num(cf(m))$ be equal to $S_1$. Each subface may have four types of the edges: a boundary edge, two boundary connected edges, the splitting line segments, and some dummy edges. We select a point on one of the dummy edges of the current subface, called \textit{center point}. If the current subface $cf$ has no dummy edges and so is a triangle, we select the intersection point of the two $bcd$ edges of it. In fact, we use the dividing line segments used in proof of Lemma 6 by Gritzmann et al. \cite{11}. Therefore, we order all the points surrounded by $cf$ radially around the center point, find $(S_1-SUM+1)th$ point inside $cf$, and index it by $q$. A dividing line segment ${sl}_2$ is a line segment passing through the center point and $q$. The dividing line segments exist since all subpolygons are convex. So we partition the simple polygon $P$ into two new subpolygons $R$ and $L$, such that $L$ is the subpolygon between ${sl}_1$ and ${sl}_2$ that contains $f(t)$ and $R$ is the subpolygon between ${sl}_1$ and ${sl}_2$ that contains $f(m)$ (Figure \ref{fig:10}). In fact, the subfaces form $f(t)$ to $cf(m)$ in the $CCW$ direction constitute $L$. And $R$ consists of the subfaces form $f(m)$ to $cf(t)$ in the clockwise direction.

Consider the boundary of the subpolygon $L$ as a clockwise path ${path}_1$ and the boundary of $R$ as a counter clockwise path ${path}_2$ starting from and ending at ${sl}_2$. Then, $SCL$ and $SCR$ are two lists of consecutive subfaces in ${path}_1$ and ${path}_2$ order, respectively. Figure \ref{fig:11} shows $SCL=f_1\rightarrow f_2\rightarrow f_3\rightarrow f_4\rightarrow f_5$ (the subfaces of $L$) and $SCR=f'_1\rightarrow f'_2\rightarrow f'_3\rightarrow f'_4\rightarrow f'_5$ (the subfaces of $R$). Let $SSS(R)$ and $SSS(L)$ be functions that return the portions of the $SSS$ that corresponds to $R$ and $L$, respectively. The pseudo-code of the non recursive algorithm is given in the Figure \ref{fig:15}.

\begin{figure}[h]
\vspace*{-5cm}
\hspace{-10cm}
\resizebox{2\textwidth}{!}{%
  \includegraphics{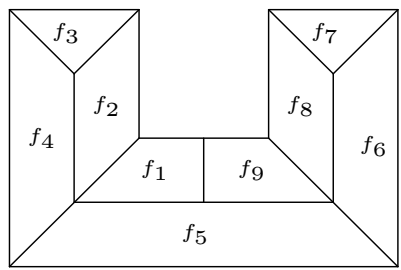}
}
\vspace*{-35cm}
\caption{A sequence of the neighbor faces.}
\label{fig:9}       
\end{figure}

\begin{figure}[h]
\vspace*{-5cm}
\hspace{-10cm}
\resizebox{2\textwidth}{!}{%
  \includegraphics{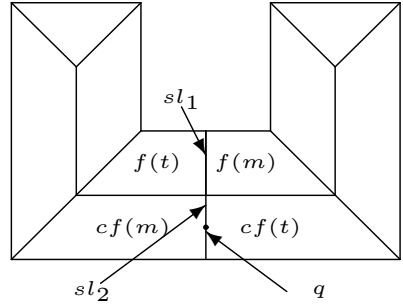}
}
\vspace*{-34cm}
\caption{Two new subpolygons $R$ and $L$.}
\label{fig:10}       
\end{figure}

\begin{figure}[h]
\vspace*{-5cm}
\hspace{-10cm}
\resizebox{2\textwidth}{!}{%
  \includegraphics{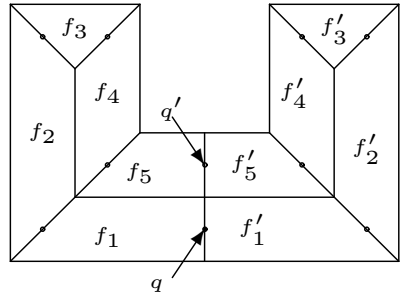}
}
\vspace*{-34.5cm}
\caption{The points $q$, $q'$, and the middle points of the common borders of the subfaces.}
\label{fig:11}       
\end{figure}


\alglanguage{pseudocode}
\algsetblock[Name]{Initial}{}{3}{1cm}
\begin{algorithm}[h]

\caption{The Partitioning Algorithm}
\label{PartitioningAlg}
\begin{algorithmic}[1]
\Statex \textbf{Input:} A simple polygon $P$, a point set $S$ inside $P$, a balanced binary tree $BBT$

\Statex \textbf{Output:}  Two simple polygons $R$ and $L$, two lists $SCR$ and $SCL$, the point $q\in S$ on the partitioning polygonal chain, two skeleton like structures $SLS(R)$ and $SLS(L)$

\State Set $SS=StraightSkel(P)$, $ms=MiddlePoint(SS)$, $SSS=Split(SS)$, $f=$ an incident face to $ms$, $SUM=0$, $BBT_1=LeftSubTree(BBT)$
\State Split $f$ into $f(m)$ and $f(t)$ using a line passing through $ms$

\State Create the list of the subfaces of $SSS$ called $SC$
\State $S_1=Num(BBT_1)$
\State $f=SC.first$ 
\While {$SUM+Num(f)\leq S_1$} 
\State $SUM=SUM+Num(f)$
\State $f=f.next$
\EndWhile
\State $cf=f$
\State Divide $cf$ into $cf(m)$ and $cf(t)$ using a line passing through the point $q$ inside $cf$
\State Construct the list of the left new subpolygon $L$ subfaces called $SCL$ \State Construct the list of the right new subpolygon $R$ subfaces called $SCR$ 
\State Return two new subpolygons $R$ and $L$, $SCR$ and $SCL$, the point $q$, $SSS(R)$ and $SSS(L)$.
\end{algorithmic}
 \end{algorithm}

\noindent \textbf{Step 2. The recursive step.}

Given the point $q$ and the new subpolygons $R$ and $L$ computed in the previous step, we call the recursive embedding algorithm, called $RecEmbeddingAlg$, for embedding the left and right subtrees of $BBT$ on the points inside $L$ and $R$, respectively (line $4$ and $5$ of Algorithm \ref{EmbeddingAlg}). 

The recursive embedding algorithm $RecEmbeddingAlg$ takes a simple polygon $P$ which consists of a sequence of the neighbor subfaces $SC$, the skeleton like structure $SLS$ of $P$, a point $q\in S$ on the border of the first subface of $SC$, and a balanced binary tree $BBT$. Moreover, the last input $sid$ indicates whether the input subpolygon is $R$ or $L$. The output of $RecEmbeddingAlg$ is a point set embedding of $BBT$ on $S$ inside $P$ (see Algorithm \ref{RecEmbeddingAlg}).

We have computed the straight skeleton in the previous step, while in this step we compute the straight skeleton like structure $SLS$ of the input simple polygon as follows. Given the sequence of the consecutive neighbor subfaces $SC$ and the point $q\in S$ on the border of the first subface of $SC$, we insert a dummy node at the middle point of the common border of each two neighbor subfaces of $SC$ called \textit{the backbone point}. Moreover, we insert a dummy node at the middle point of $$Perimeter\left(SC.last\right)-Commonedge\left(SC.last,\ SC.last.prev\right),$$ called $q'$. 

Consider the backbone points of the neighbor subfaces of $SC$, the starting point $q$, and the point $q'$ (Figure \ref{fig:11} depicts these points on the subfaces of $L$ and $R$). We connect these points to each other, using straight line segments, called \textit{backbone line segments}. Consider \textit{the backbone} as a directed path $qq'$ that partitions the consecutive subfaces into two sequences of the subfaces, the sequence of the subfaces that are in the left side of the backbone, called $SCL$ and the ones in the right side of it, called $SCR$. We slightly perturb the backbone line segments such that all new subfaces be convex, since finding a dividing line segment requires a convex simple polygon.

Now, we construct a new list by concatenating the lists $SCR$ and $SCL$. Two cases arise; $sid=r$ and $sid=l$. In the first case, that is the case in which the input subpolygon is $R$, let $SC= Concatenation(SCR,SCL)$ where $SC.first=SCR.first$ and $SC.last=SCL.last$. 
Now, assume that the input subpolygon is $L$, in this situation $SC= Concatenation(SCL,SCR)$ where $SC.first=SCL.first$ and $SC.last=SCR.last$. 

Then, we process the subfaces of $SC$, starting from $SC.first$, one by one (Figures \ref{fig:12} and \ref{fig:13} depict the consecutive subfaces of $L$ and $R$). As the non recursive step, when the sum of the points surrounded by the processed subfaces, $SUM$, is greater than the number of the nodes of the left child of $BBT$, $Num(BBT_1)$, the process stops. Then, we select a point of the backbone that is incident to the current subface $cf$, called the point $t'$, order all the points surrounded by $cf$ radially around $t'$, and find $(Num(BBT_1)-SUM+1)th$ point inside $cf$, called $t$. Then, we divide the current subface by the dividing line segment passing through $t'$ and $t$.
We map the root of $BBT$ on $t$, and connect $t$ to $q$ using $tt'$ and the portion of the backbone between $q$ and $t'$. And finally, we recursively call the recursive embedding algorithm for subtrees of $BBT$, and the two newly created subpolygons (see Figure \ref{fig:lr}).

\alglanguage{pseudocode}
\begin{algorithm}

\caption{The Recursive Embedding Algorithm}
\label{RecEmbeddingAlg}
\begin{algorithmic}[1]
\Statex \textbf{Input:} A simple polygon $P$, its skeleton like structure $SLS$, a sequence of the consecutive subfaces $SC$, a balanced binary tree $BBT$, a point set $S$ inside $P$, an starting point $q\in S$ which is incident to $SC.first$, the direction $sid$ that in which the input subpolygon is placed 

\Statex \textbf{Output:} A point set embedding of $BBT$ on $S$ inside $P$

\If{$BBT$ is  null}
\State Stop
\EndIf
\State $SUM=0$
\State Set $BBT_1=LeftSubTree(BBT)$, and $BBT_2=RightSubTree(BBT)$
\State Construct the backbone of $SC$
\State Construct two new lists of the subfaces $SCR$ and $SCL$  
\If{$sid=r$}
\State $SC=Concatenation(SCR,SCL)$
\Else 
\State $SC=Concatenation(SCL,SCR)$
\EndIf
\State $f=SC.first$.
\While {$SUM+Num(f)\leq S_1$} 
\State $SUM=SUM+Num(f)$
\State $f=f.next$
\EndWhile
\State Set $cf=f$
\State Divide $cf$ into $cf(m)$ and $cf(t)$ using a line passing through the point $t$ inside $f$
\State  Construct the subface list of the left new subpolygon $L$ called $SCL$
\State  Construct the subface list of the right new subpolygon $R$ called $SCR$
\State Map the root of $BBT$ on the point $t$
\State Connect $q$ to $t$
\State \Call {RecEmbeddingAlg}{R,SLS(R),SCR,t,$BBT_1$,r}
\State \Call {RecEmbeddingAlg}{L,SLS(L),SCL,t,$BBT_2$,l}

\end{algorithmic}
 \end{algorithm}

\begin{figure}[h]
\vspace*{-4cm}
\hspace{-10cm}
\resizebox{2\textwidth}{!}{%
  \includegraphics{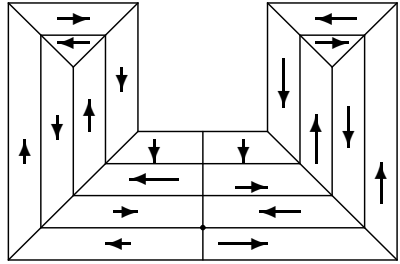}
}
\vspace*{-35cm}
\caption{The direction of processing the new subfaces.}
\label{fig:12}       
\end{figure}

\begin{figure}[h]
\vspace*{-4cm}
\hspace{-10cm}
\resizebox{2\textwidth}{!}{%
  \includegraphics{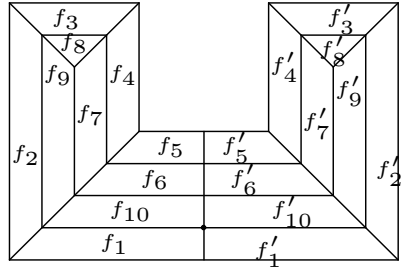}
}
\vspace*{-35cm}
\caption{The new subfaces.}
\label{fig:13}       
\end{figure}

\begin{theorem}
\label{theorem1}
Given an $n$-vertex balanced binary tree $BBT$ and a set of $n$ points $S$ bounded by a simple $m$-gon $P$, we can embed $BBT$ on $S$ inside $P$ using our embedding algorithm in $O(m^2+n\log^2 n+mn)$ time, such that the bend number of each edge is bounded by $O(m)$ in the worst case.
\end{theorem}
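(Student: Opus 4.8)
The plan is to analyze the algorithm in three parts: correctness of the partition, the bend count per edge, and the running time, treating the recursion depth $O(\log n)$ as the driving parameter.

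First I would verify correctness, i.e. that the algorithm actually produces a valid point-set embedding. The key invariant is that at each recursive call the input subpolygon $P$ is simple, its boundary is decomposed into a cyclic sequence $SC$ of convex subfaces (guaranteed by Lemma~\ref{lem:1} together with the perturbation of the backbone line segments), and the starting point $q\in S$ lies on the border of $SC.first$. Given this, the while-loop finds the first subface $cf$ at which the running count $SUM$ of enclosed points reaches or exceeds $Num(BBT_1)$; since $cf$ is convex and contains at least one point, ordering the points of $cf$ radially around the chosen backbone point $t'$ and taking the $(Num(BBT_1)-SUM+1)$th of them yields a point $t$ and a dividing segment $t't$ that splits $P$ into $L$ and $R$ with exactly $Num(BBT_1)$ points in one side and $Num(BBT_2)$ in the other, with $t$ itself on the cut. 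This is precisely the dividing-segment construction from Lemma~6 of \cite{11}, which guarantees the segment stays inside the convex subface and does not cross the already-placed backbone; here I would cite that lemma rather than re-prove it. Mapping the root of $BBT$ to $t$ and drawing the edge from $q$ (the parent's point) to $t$ along $t't$ followed by the portion of the backbone between $t'$ and $q$ keeps all edges inside $P$ and non-crossing, because distinct recursive calls operate on interior-disjoint subpolygons and the backbone of a subpolygon lies in its interior except at its two endpoints. An easy induction on $Num(BBT)$, with the base case $BBT$ null, then gives a planar point-set embedding.

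Next, the bend bound. An edge of $BBT$ is drawn as the segment $t't$ plus a sub-path of the backbone of the current subpolygon. The backbone is a polygonal chain through one backbone point per subface in $SC$, and $SC$ has at most as many subfaces as the current subpolygon has faces; since every subpolygon arising in the recursion is a sub-region of the original $m$-gon cut out by straight-skeleton pieces, splitting segments and (perturbed) backbone segments, the number of its subfaces is $O(m)$ --- I would make this precise by observing that each recursive cut adds only a bounded number of new subfaces incident to the cut, and the straight skeleton of the original polygon contributes $m$ faces by Lemma~\ref{face-num}. Hence each edge traverses $O(m)$ backbone vertices, giving $O(m)$ bends per edge.

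Finally, the running time. Computing the straight skeleton and its middle point, splitting all reflex vertices (Lemma~\ref{lem:1}), and building the initial subface list $SC$ costs $O(m^2)$ by the straight-skeleton construction of \cite{1} and the middle-point computation of \cite{3}. The recursion tree has depth $O(\log n)$ and at each level partitions the point set; the dominant cost at a node handling $k$ points inside a subpolygon with $O(m)$ subfaces is: $O(m)$ to build the backbone and locate $cf$ by the prefix-sum scan, plus $O(k\log k)$ for the radial sort around $t'$ inside $cf$. Summing the radial-sort cost over one level gives $O(n\log n)$, and over all $O(\log n)$ levels gives $O(n\log^2 n)$; summing the $O(m)$ backbone/scan cost over all $O(n)$ nodes of the recursion gives $O(mn)$. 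Adding the one-time $O(m^2)$ yields the claimed $O(m^2+n\log^2 n+mn)$. The main obstacle I expect is the bookkeeping that keeps the "$O(m)$ subfaces per subpolygon" and "convex subfaces" invariants stable under repeated cutting and perturbation --- in particular arguing that the perturbation of the backbone can always be chosen small enough to convexify every new subface without destroying simplicity or the point counts --- and making sure the radial-sort work is charged so that no point is sorted more than $O(\log n)$ times overall.
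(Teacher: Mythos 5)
Your proposal is correct and follows essentially the same route as the paper's proof: the same $O(m^2)$ preprocessing via the straight skeleton of \cite{10}, the same balanced recursion (your level-by-level summation is just the unrolled form of the paper's recurrence $T(n)=2T(n/2)+n\log n+O(m)$, giving $O(n\log^2 n+mn)$), and the same bend argument that each edge follows the skeleton-like structure of $O(m)$ complexity. The only difference is that you spell out the correctness and the convexity/subface-count invariants explicitly, which the paper leaves implicit in its algorithm description.
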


\begin{proof}
In Step 1, we first compute the straight skeleton of the input simple $m$-gon using the algorithm of \cite{10} in $O(m^2)$ time, and then add the splitting line segments in $O(m)$ time. Processing the points inside the subfaces and finding the dividing line segment take $O(n+m)$ and $O(n \log n)$ time, respectively. Hence, Step 1 runs in $O(m^2+n \log n)$ time.

In Step $2$, we recursively solve two subproblems, each of size $O(n/2)$, which contributes $2T(n/2)$ to the running time. The time complexity of the recursive embedding algorithm, $T\left(n\right)$, is bounded by $O(n\log^2 n+mn)$ by solving the recursive equation $$T\left(n\right)=2T\left(\frac{n}{2}\right)+n{\log n}+O(m).$$ Where $n{\log n}$ is the time needed for sorting $n$ points inside the simple polygon and finding the dividing line segment. Moreover, $O(m)$ is the time of constructing the new skeleton like structures of new subpolygons in each recursion. So, the time complexity of the embedding algorithm is $O(m^2+n\log^2 n+mn)$. 

The bend number of each edge is bounded by $O(m)$ in the worst case, since each edge of the input balanced binary tree is a portion of one of the straight skeleton like structures $SLS$. Recall that the $SLS$ structures are constructed by connecting the middle points of the $bcd$ edges and the splitting edges, whose total number is bounded by $O(m)$, to each other.
\end{proof}

\section{An Example}
\label{exa}

In this section, we present an example for point-set embedding of a $15$-vertex complete binary tree on the point set of Figure \ref{fig:17} using our new algorithm. In Figures \ref{fig:18}, \ref{fig:19}, and \ref{fig:20}, the vectors depict the paths used as the edges from the circled nodes to their children. Figure \ref{fig:21} represents the output of our embedding algorithm after removing the additional line segments.

\begin{figure}[h]
\vspace*{-4cm}
\hspace{-10cm}
\resizebox{2\textwidth}{!}{%
  \includegraphics{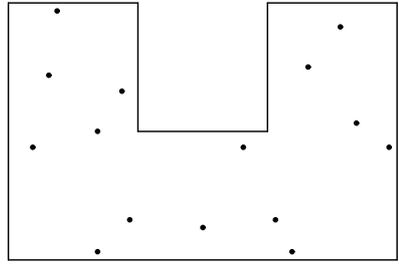}
}
\vspace*{-35cm}
\caption{A point set surrounded by a simple polygon.}
\label{fig:17}       
\end{figure}

\begin{figure}[h]
\vspace*{-5cm}
\hspace{-10cm}
\resizebox{2\textwidth}{!}{%
  \includegraphics{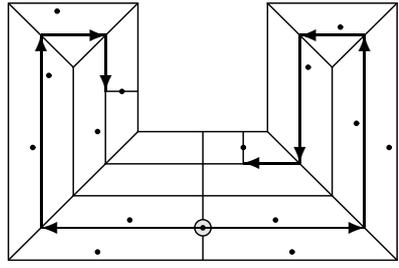}
}
\vspace*{-35cm}
\caption{Connecting the root of the $15$-vertex complete binary tree to its children.}
\label{fig:18}       
\end{figure}

\begin{figure}[h]
\vspace*{-5cm}
\hspace{-10cm}
\resizebox{2\textwidth}{!}{%
  \includegraphics{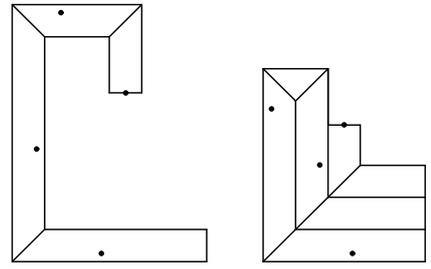}
}
\vspace*{-35cm}
\caption{Two new subpolygons of the left subpolygon of Figure \ref{fig:18} }
\label{fig:lr}       
\end{figure}

\begin{figure}[h]
\vspace*{-5cm}
\hspace{-10cm}
\resizebox{2\textwidth}{!}{%
  \includegraphics{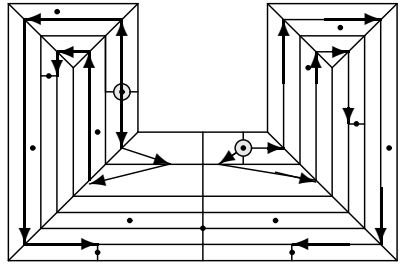}
}
\vspace*{-35cm}
\caption{The edges connecting the nodes of the second level to their children.}
\label{fig:19}       
\end{figure}

\begin{figure}[h]
\vspace*{-6cm}
\hspace{-10cm}
\resizebox{2\textwidth}{!}{%
  \includegraphics{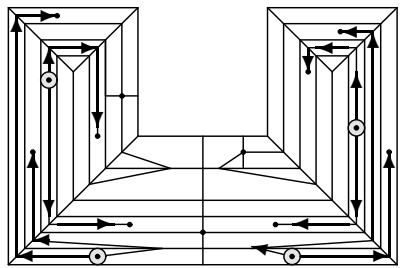}
}
\vspace*{-34.5cm}
\caption{The edges that connect the leaves to their parents.}
\label{fig:20}       
\end{figure}

\begin{figure}[h]
\vspace*{-5cm}
\hspace{-10cm}
\resizebox{2\textwidth}{!}{%
  \includegraphics{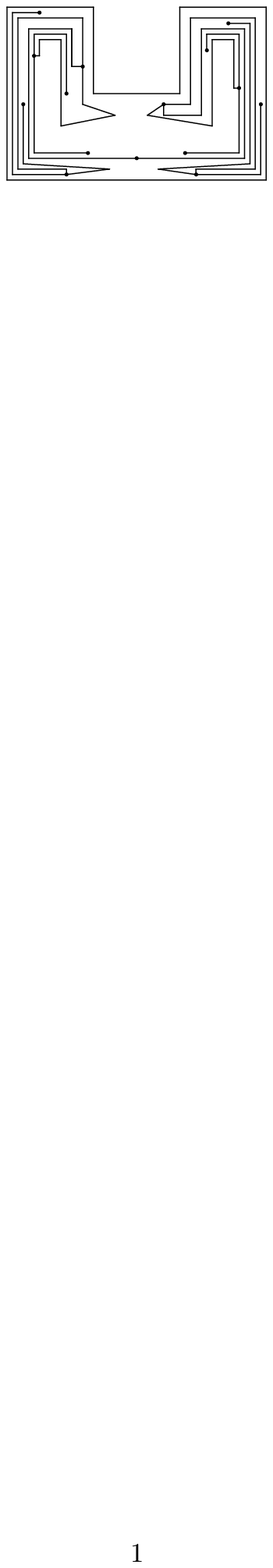}
}
\vspace*{-35cm}
\caption{The embedded $15$-vertex complete binary tree on the point set of Figure \ref{fig:17} by our embedding algorithm.}
\label{fig:21}       
\end{figure}

\section{ Concluding Remarks}
 \label{Conclsection}

In this paper, we introduce a new algorithm for point set embedding of an $n$-vertex balanced binary tree $BBT$ on a set of $n$ points $S$ bounded by a simple polygon $P$ with $m$ vertices. Our new embedding algorithm computes the output in $O(m^2 + n\log^2 n+mn)$ time with at most $O(m)$ bends per edge. As a future work, we can concentrate on the other restricted types of the graphs and polygons.


\FloatBarrier

\end{document}